\documentclass[oneside,article]{memoir}
\usepackage{amsmath}         % align-like environments
\usepackage{amsthm,thmtools,thm-restate} % theorem-like environments
\usepackage{enumitem}        % enumerate-like environments

\usepackage{tikz}
  \usetikzlibrary{matrix,arrows,positioning,calc}
  
\usepackage{tikz-cd}

\usepackage{float} %to force figure into right place
\usepackage{caption, subcaption} % for subfigures

\usepackage[nobysame,alphabetic,initials]{amsrefs} % bibliography tools
\usepackage[hidelinks]{hyperref}                   % hyperlinks
\usepackage[capitalize,nosort]{cleveref}           % named references

\usepackage{amssymb}
\usepackage[mathscr]{euscript}
\usepackage{relsize}
\usepackage{stmaryrd}
\usepackage{stackengine}

\usepackage{indentfirst} % indent the first paragraph
\usepackage{multicol}
\usepackage{algorithm}
\usepackage{algorithmic}
\usepackage[inline,nomargin]{fixme} % fixme notes
  \fxsetup{targetlayout=color, status=draft,}

\usepackage{float} %to force figure into right place

\isopage[12]

\setlrmargins{*}{*}{1}
\checkandfixthelayout

\counterwithout{section}{chapter}
\usepackage{appendix}

  \newcommand{\iso}{\mathrel{\cong}}

  \declaretheorem[style=definition,within=section]{definition}
  \declaretheorem[style=definition,numberlike=definition]{example}
  \declaretheorem[style=definition,numberlike=definition]{remark}

  \declaretheorem[style=plain,numberlike=definition]{proposition}
  \declaretheorem[style=plain,numberlike=definition]{theorem}
  \declaretheorem[style=plain,numberlike=definition]{notation}

  \declaretheorem[style=plain,numbered=no,name=Theorem]{theorem*}

  \Crefname{corollary}{Corollary}{Corollaries}
  \Crefname{definition}{Definition}{Definitions}
  \Crefname{lemma}{Lemma}{Lemmas}
  \Crefname{proposition}{Proposition}{Propositions}
  \Crefname{remark}{Remark}{Remarks}
  \Crefname{theorem}{Theorem}{Theorems}
  \Crefname{notation}{Notation}{Notations}
  \Crefname{conjecture}{Conjecture}{Conjectures}
  \Crefname{example}{Example}{Examples}

  \newlist{axioms}{enumerate}{1}
  \Crefname{axiomsi}{}{}

  \newenvironment{tikzeq*}
  {
    \begingroup
    \begin{equation*}
    \begin{tikzpicture}[baseline=(current bounding box.center)]
  }
  {
    \end{tikzpicture}
    \end{equation*}
    \endgroup
    \ignorespacesafterend
  }

  \tikzset
  {
    diagram/.style=
    {
      matrix of math nodes,
      column sep={4.3em,between origins},
      row sep={4em,between origins},
      text height=1.5ex,
      text depth=.25ex
    },
    over/.style={preaction={draw=white,-,line width=6pt}},
    every to/.style={font=\footnotesize},
    inj/.style={right hook->},
    surj/.style={-{Latex[open]}},
    cof/.style={>->},
    fib/.style={->>},
  }

  \DeclareFontFamily{U}{mathx}{\hyphenchar\font45}

  \DeclareFontShape{U}{mathx}{m}{n}{
    <5> <6> <7> <8> <9> <10>
    <10.95> <12> <14.4> <17.28> <20.74> <24.88>
    mathx10}{}

  \DeclareSymbolFont{mathx}{U}{mathx}{m}{n}

  \DeclareFontFamily{U}{mathb}{\hyphenchar\font45}

  \DeclareFontShape{U}{mathb}{m}{n}{
    <5> <6> <7> <8> <9> <10>
    <10.95> <12> <14.4> <17.28> <20.74> <24.88>
    mathb10}{}

  \DeclareSymbolFont{mathb}{U}{mathb}{m}{n}

  \DeclareMathAccent{\widebar}{0}{mathx}{"73}

  \DeclareMathSymbol{\Rsh}{\mathrel}{mathb}{"E9}

  \DeclareFontFamily{U}{MnSymbolA}{}

  \DeclareFontShape{U}{MnSymbolA}{m}{n}{
    <-6> MnSymbolA5
    <6-7> MnSymbolA6
    <7-8> MnSymbolA7
    <8-9> MnSymbolA8
    <9-10> MnSymbolA9
    <10-12> MnSymbolA10
    <12-> MnSymbolA12}{}

  \DeclareSymbolFont{MnSyA}{U}{MnSymbolA}{m}{n}

  \DeclareMathSymbol{\twoheaddownarrow}{\mathrel}{MnSyA}{27}

  \newcommand{\MSC}[1]{%
    \let\thempfn\relax
    \footnotetext[0]{2020 Mathematics Subject Classification: #1.}
  }

\tikzstyle{vertex}=[circle, draw, minimum size=7pt, inner sep=0pt]

\renewcommand{\Im}{\operatorname{Im}}

\newcommand{\Graph}{\mathsf{Graph}} % category of graphs

\author{Jacob Ender \and Krzysztof Kapulkin} % alphabetical order (by last name)
\title{Towards fast computation of higher discrete homology}
\date{\today}

\begin{document}

  \maketitle

  \begin{abstract}
    We develop a new algorithm for computing the second discrete homology group of a graph which is much faster when compared to existing algorithms.
    To do so, we identify five basic shapes, which are quotient graphs of the 3-cube with the property that the injective maps from them detect all possible 2-boundaries in the singular chain complex computing discrete homology.
  \end{abstract}

%  \setlist[enumerate]{label=(\arabic*)}
  
% Add content here
\section*{Introduction}

Discrete homology groups of a graph, introduced in \cite{barcelo-capraro-white}, have proven over the previous decade to be both important and interesting graph invariants.
Their importance is underscored by the wealth of applications that they have found, ranging from pure mathematics, specifically areas such as hyperplane arrangements \cite{barcelo-laubenbacher} and matroid theory \cite{chalopin-chepoi-osajda}, to quite applied fields, including network analysis \cite{atkin:i,atkin:ii} and topological data analysis \cite{kapuklin-kershaw:data-analysis}.
There is currently very little that we understand about discrete homology groups, even of very small graphs.
Case in point: the Greene sphere, named after Curtis Greene, is a graph with 10 vertices and 16 edges whose discrete homology groups are known only up to degree $3$.

In recent years, significant effort was devoted to developing software to compute discrete homology groups.
The first major step in this direction was taken in \cite{barcelo-greene-jarrah-welker:comparison}, where machine-assisted computations were used to show that discrete homology sees qualitatively different information than path homology.
The algorithms developed there could compute the second discrete homology groups of most small graphs, but were generally unable to go beyond that.
The state of the art algorithm, independent of the homological degree, was developed several years later in a preprint \cite{kapuklin-kershaw:computations}, where several innovations were introduced, notably the pairing of lower-dimensional cubes to obtain higher-dimensional ones that subsequently unlocked numerous additional speedups.
This algorithm is now commonly referred to in literature as \emph{cubical}.

The aforementioned computational work has since informed many theoretical results and, conversely, was itself made possible by theoretical contributions.
For example, the work of Behrens and Jamil \cite{behrens-jamil} used the degree spectral sequence to provide a theoretical justification of the computation of the third discrete homology group of the Greene sphere from \cite{kapuklin-kershaw:computations}.
Likewise, \cite{carranza-kapulkin:discrete-homotopy-hypothesis} provided theoretical justification for several other computations of discrete homology and homotopy groups of graphs.
In the opposite direction, the theoretical work of \cite{carranza-kapulkin:cubical-setting,barcelo-greene-jarrah-welker:connections,kapulkin-mavinkurve,greene-welker-wille} provides the necessary mathematical foundation to ensure the correctness of algorithms developed in \cite{kapuklin-kershaw:computations}.
Put together, the interactions between the theoretical and the computational side of the field make discrete homotopy theory an active and vibrant area.

In \cite{ender-kapulkin:fast-h1}, a new algorithm, termed \emph{cellular}, for computing the first discrete homology group was proposed.
The algorithm is based on the observation that the first discrete homology group of a graph can be computed as the (classical) first discrete homology group of a cell complex (hence `cellular' algorithm) obtained from a graph by gluing 2-dimensional cells into every 3- and 4-cycle.
In other words, the question of computing discrete homology can be reduced to finding 3- and 4-cycles in a graph and doing a matrix computation, with a host of efficient methods available for both of these operations.
The present paper continues this line of research by outlining a strategy for finding higher-dimensional analogues of $3$- and $4$-cycles.

By definition, the basis of the singular chain complex $C_\bullet(G)$ associated to a graph $G$ is given, in degree $n$, by non-degenerate graph maps from the $n$-dimensional hypercube $Q^n \to G$.
Its grows doubly exponentially with $n$, making computations of the fourth discrete homology group not feasible in most cases.
The work of \cite{ender-kapulkin:fast-h1} can be understood in this context as wanting to replace the chain complex $C_\bullet(G)$ with a quasi-isomorphic one, call it $S_\bullet(G)$, with the property that, for every positive integer $n$, we have $\dim S_n(G) \ll \dim C_n(G)$.
Indeed, we can reinterpret the aforementioned result by saying that $S_1(G)$ is free on the edges and $S_2(G)$ is free on $3$- and $4$-cycles of $G$.

While we are unable to find $S_3(G)$ that extends this tentative chain complex, we are able to find a small subspace $S_3(G) \subseteq C_3(G)$ spanned by injective maps from five \emph{basic shapes} to $G$ with the property that the restriction of the differential $\partial_3 \colon C_3(G) \to C_2(G)$ to $S_3(G)$ has the same image (and, in particular, the same rank) as the differential itself.
Since the most expensive part of the computation of the second discrete homology group is identifying and reducing the matrix of this differential, we are able to improve computation time.
It was perhaps well-known to experts that these basic shapes should be drawn from quotients of the 3-dimensional hypercube; our contribution lies in identifying the specific five shapes that need to be considered.

\textbf{Organization.}
This paper is organized as follows.
In \cref{sec:prelims}, we review the necessary background on discrete homology.
In \cref{sec:fast_h2}, we describe our algorithm for computing the second discrete homology group.
Finally, in \cref{sec:experiments}, we report on the experiments that compare our algorithm with the now-standard cubical algorithm.

\textbf{Acknowledgments.}
We have greatly benefited from conversations with D.~Carranza and N.~Kershaw.
The idea behind \cref{def:n_shapes_cat} and the implementation of the algorithm in \cref{def:universal_list} were given to the authors by ChatGPT.

\section{Discrete homology} \label{sec:prelims}

In this section, we review the definition of discrete homology of graphs, following \cite{barcelo-capraro-white,barcelo-greene-jarrah-welker:comparison,barcelo-greene-jarrah-welker:vanishing}.
We begin by recalling the definition of a graph.

\begin{definition} \label{def:graph} \leavevmode
    \begin{enumerate}
        \item A \emph{graph} $G$ is a set $G_V$ together with a reflexive, symmetric relation $G_E \subseteq G_V \times G_V$.
        The set $G_V$ is called the set of \emph{vertices} of $G$, and the relation $G_E$ is called the set of \emph{edges} of $G$. 
        \item For graphs $G$ and $H$, a function $f \colon G_V \to H_V$ is a \emph{graph map} if $f$ preserves the adjacency relation. 
        \item Graphs and graph maps form a category denoted $\Graph$.
    \end{enumerate}
\end{definition}

\begin{notation}
If $v, w \in G_V$, the notation $v \sim w$ means that $(v,w) \in G_E$.
\end{notation}

\begin{example} \label{def:graph_exs} \leavevmode
    \begin{enumerate}
        \item The graph $I_n$ has vertex set $\{ 0, 1, \dots, n \}$, with edges $(i, i+1)$ for $0 \leq i \leq n-1$. 
        \item The graph $C_n$ is the graph $I_n$, quotiented by the relation identifying vertex $0$ with vertex $n$.
        \item For an integer $n \geq 1$, the \emph{complete graph} $K_n$ is a graph with $n$ vertices and an edge between each pair of vertices $v$ and $w$.
    \end{enumerate}

    Examples of each of these graphs are depicted below.

    \begin{center}
            \begin{tikzpicture}[scale=1,
              every node/.style={circle, fill=black, inner sep=1.5pt},
              every path/.style={thick}
            ]
        
                % Common y baseline at y=0
                % 1. I_n (n=4)
                \begin{scope}[shift={(-1,0)}]
                  \foreach \i in {0,...,4} {
                    \node (a\i) at (\i,0) {};
                  }
                  \foreach \i in {0,...,3} {
                    \draw (a\i) -- (a\the\numexpr\i+1\relax);
                  }
                  \node[draw=none, fill=none, below=0.5cm of a2, font=\small] {The graph $I_4$.};
                \end{scope}

                % 3. C_n (n=5) - upright pentagon
                \begin{scope}[shift={(6.5,0)}]
                  % vertices centered so middle matches y=0 baseline
                  \foreach \i in {0,...,4} {
                    \node (c\i) at ({90 - 72*\i}:1.5) {};
                  }
                  \foreach \i in {0,...,4} {
                    \pgfmathtruncatemacro{\j}{mod(\i+1,5)}
                    \draw (c\i) -- (c\j);
                  }
                  % Place label at a fixed offset from center
                  \node[draw=none, fill=none, below=2cm of c0, font=\small] {The graph $C_5$. };
                \end{scope}
    
                % K_4
                \begin{scope}[shift={(11.5,0)}]
                  \foreach \i in {0,...,3} {
                    \node (k\i) at ({45 - 90*\i}:1.5) {};
                  }
                  \foreach \i in {0,...,3} {
                    \foreach \j in {0,...,3} {
                      \ifnum\i<\j
                        \draw (k\i) -- (k\j);
                      \fi
                    }
                  }
                  \node[draw=none, fill=none, font=\small] at (0, -1.7) {The graph $K_4$.};
                \end{scope}
        
            \end{tikzpicture}
        \end{center}
\end{example}

There is more than one notion of a product of two graphs, but for the purposes of defining discrete homology, we need only consider the box product. 

\begin{definition} \label{def:boxprod}
Given graphs $G$ and $H$, their \emph{box product} $G \square H$ has vertex set $G_V \times H_V$, and edges are pairs $((u, v), (u', v'))$, where either $u = u'$ and $v \sim v'$ or $u \sim u'$ and $v = v'$. 
\end{definition}

\begin{example} \leavevmode
    \begin{center}
    
        \begin{tikzpicture}[scale=1,
          every node/.style={circle, fill=black, inner sep=1.5pt},
          every path/.style={thick}
        ]
    
          \begin{scope}[shift={(-1.5,0)}]
          \foreach \x in {0,...,2} {
            \foreach \y in {0,...,2} {
              \node (b\x\y) at (\x, \y) {};
            }
          }
          % horizontal edges
          \foreach \x in {0,...,1} {
            \foreach \y in {0,...,2} {
              \draw (b\x\y) -- (b\the\numexpr\x+1\relax\y);
            }
          }
          % vertical edges
          \foreach \x in {0,...,2} {
            \foreach \y in {0,...,1} {
              \draw (b\x\y) -- (b\x\the\numexpr\y+1\relax);
            }
          }
          \node[draw=none, fill=none, below=0.5cm of b11, font=\small] {The graph $I_2^{\square 2}$.};
          \end{scope}
    
           \begin{scope}[shift={(3.5,0)}]
                  
                \node(A) at (0.5,0) {};
                \node(B) at (2,0) {};
                \node(C) at (2,1.5) {};
                \node(D) at (0.5,1.5) {};
    
                \node(E) at (1,0.5) {};
                \node(F) at (2.5,0.5) {};
                \node(G) at (2.5,2) {};
                \node(H) at (1,2) {};
            
                \draw (A) -- (B) -- (C) -- (D) -- (A);
                \draw (E) -- (F) -- (G) -- (H) -- (E);
    
                \draw (A) -- (E);
                \draw (B) -- (F);
                \draw (C) -- (G);
                \draw (D) -- (H);
                
                  \node[draw=none, fill=none, font=\small] at (1.5, -0.78) {The graph $I_1^{\square 3}$.};
            \end{scope}
            
        \end{tikzpicture}
    
    \end{center}
\end{example}

Now that we have a notion of graph products, we can build up the structures that give us a chain complex to compute the homology of. The structures that we use are \emph{cubes} in a graph, which are given by the following two definitions.

\begin{definition} \label{def:discrete_n_cube}
    The \emph{discrete n-cube} is the graph $I_1^{\square n}$. Vertices are labeled $(x_1, \dots, x_n)$, where $x_i \in \{0,1\}$ for $0 \leq i \leq n$. For brevity, we denote the discrete $n$-cube by $Q^n$. 
\end{definition}

\begin{definition} \label{def:sing_n_cube}
    If $G$ is a graph, then a \emph{singular n-cube} in $G$ is a graph map $A \colon Q^n \to G$. 
\end{definition}

Given a singular $n$-cube, one can ask about its faces. In particular, defining faces allows us to track whether or not a singular $n$-cube is degenerate.

\begin{definition} \label{def:face_maps}
    Fix a graph $G$ and a singular $n$-cube $A \colon Q^n \to G$, where $n \geq 1$. For $1 \leq i \leq n$, define the singular $(n-1)$-cubes $\delta_i^{+}A$ and $\delta_i^{-}A$ as follows.
    \begin{align*}
        \delta_i^+A(x_1,\dots,x_{n-1}) &:= A(x_1, \dots, x_{i-1}, 1, x_i, \dots, x_{n-1}) \\
        \delta_i^-A(x_1,\dots,x_{n-1}) &:= A(x_1, \dots, x_{i-1}, 0, x_i, \dots, x_{n-1}). 
    \end{align*}
    The cubes $\delta_i^+A$ and $\delta_i^-A$ are called the $i^{\text{\emph{th}}}$ \emph{positive and negative face maps}, respectively. If $\delta_i^+A = \delta_i^-A$ for some $i$, then we call $A$ \emph{degenerate}. Otherwise, we call $A$ \emph{non-degenerate}. 
\end{definition}

We now have everything we need to give a definition of discrete homology. We begin with the following definition.

\begin{definition} \label{def:modules}
    For a graph $G$ and $n \geq 0$, let $L_n(G)$ be the free $\mathbb{Z}/2$-vector space generated by all singular $n$-cubes in $G$. Let $D_n(G)$ be the free $\mathbb{Z}/2$-vector space generated by all degenerate $n$-cubes in $G$. Finally, let $C_n(G) = L_n(G) / D_n(G)$, so that $C_n(G)$ is the free submodule of $L_n(G)$ generated by all non-degenerate $n$-cubes in $G$. 
\end{definition}

We are now ready to give the definition of the boundary of a singular $n$-cube, and finally of the discrete homology groups of a graph.

\begin{definition} \label{def:boundary}
    Fix a graph $G$. For $n \geq 1$ and a singular $n$-cube $A$ in $G$, define the \emph{boundary} of $A$ as the formal sum
    \[
    \partial_n(A) := \sum_{i=1}^n (-1)^i (\delta_i^-A - \delta_i^+A). 
    \]
    We can extend this map linearly to a map $\partial_n \colon L_n(G) \to L_{n-1}(G)$. 
\end{definition}

We set $\partial_0$ to be the trivial map. In \cite{barcelo-capraro-white}, it was shown that for any $n \geq 0$, we have $\partial_n \circ \partial_{n+1} = 0$ and $\partial_n[D_n(G)] \subseteq D_{n-1}(G)$. Thus if we set $C_{-1}(G) = (0)$, we obtain a chain complex $C(G) = (C_{\bullet}, \partial_{\bullet})$.  

\begin{definition} \label{def:homology_def}
    For $n \geq 0$, the \emph{n-th discrete homology group of G} is $\mathcal{H}_n(G) = \operatorname{Ker}(\partial_n) / \Im(\partial_{n+1})$. 
\end{definition}

\section{Towards computing higher discrete homology groups} \label{sec:fast_h2}

Computing discrete homology directly from the definition is virtually impossible by hand.
Case in point: the second and third discrete homology group of the Greene sphere, a graph with only 10 vertices and 16 edges, was only identified using machine computations on large compute clusters, while the fourth and higher homology groups of that graph remain unknown.
It is however possible to speed up computations of the first discrete homology group of a graph $G$, which was already rather inefficient.
To do so, we might instead compute cellular homology of a CW-complex $X_G$ associated to $G$.
The $1$-skeleton of $X_G$ is given by $G$ and we glue in a single $2$-cell for every (simple) $3$- and $4$-cycle in $G$.

\begin{theorem}[{\cite[Thm.~3.4]{ender-kapulkin:fast-h1}, cf.~\cite[Prop.~5.12]{barcelo-kramer-laubenbacher-weaver}}]\label{thm:hlgy_correspondence}
    For a connected graph $G$, we have $\mathcal{H}_1(G) \iso H_1(X_G)$, where $H_1$ denotes the singular homology of a topological space. \qed
\end{theorem}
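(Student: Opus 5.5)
The plan is to compare both sides through explicit presentations in degrees $0$, $1$, $2$ and to show that the cycles and the boundaries correspond. Work with $\mathbb{Z}/2$ coefficients, as in \cref{def:modules}, and take $H_1(X_G)$ as cellular homology, which agrees with singular homology for CW-complexes.

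The cellular chain complex of $X_G$ has:
\begin{itemize}
\item[] $C^{cell}_0$ free on the vertices;
\item[] $C^{cell}_1$ free on the (non-loop) edges;
\item[] $C^{cell}_2$ free on the $3$- and $4$-cycles, each mapped to the sum of its edges.
\end{itemize}
On the discrete side, a non-degenerate singular $1$-cube is a graph map $I_1 \to G$ with distinct endpoints, that is, an oriented edge. Since we work mod $2$, the two orientations of an edge give two distinct generators of $C_1(G)$ with the same boundary.

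\textbf{Step 1: a chain map in low degrees.} Send each non-degenerate $1$-cube to its underlying edge. This commutes with $\partial_1$ and is surjective. Send each non-degenerate $2$-cube $A\colon Q^2\to G$ to the sum of the distinct edges of its image. The image of a $2$-cube is one of the following: a simple $4$-cycle; a $3$-cycle, when two adjacent vertices are identified; or a single edge or path traversed back and forth, which gives zero mod $2$.

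I would check that $\partial_2 A$ maps to exactly this edge sum, which is $\partial^{cell}_2$ of the corresponding cell or zero. So we get a chain map $\phi$ in degrees $\le 2$. Every $3$- and $4$-cycle is the image of some $2$-cube, so $\phi_2$ hits all of $C^{cell}_2$.

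\textbf{Step 2: kernel of $\phi_1$.} The kernel of $\phi_1$ is spanned by the differences $e - \bar e$ of the two orientations of each edge. Each such difference must be a boundary in $C_\bullet(G)$. The $2$-cube $A$ with $A(0,0)=A(0,1)=v$ and $A(1,0)=A(1,1)=w$ is degenerate, so it cannot be used. Instead, consider the map $Q^2\to G$ with $(0,0)\mapsto v$, $(1,0)\mapsto w$, $(0,1)\mapsto w$, $(1,1)\mapsto v$. It is non-degenerate, and its boundary is, up to degenerate terms, $e+\bar e$ or $2e$ depending on conventions. I will compute it carefully. The expectation is that $e - \bar e$ lies in $\Im\partial_2$, so that $\phi_1$ induces an isomorphism on $1$-cycles modulo boundaries.

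\textbf{Step 3: comparing boundaries.} Given Step 2, $\mathcal{H}_1(G)\cong \ker\partial^{cell}_1/\phi_1(\Im\partial_2)$. By Step 1, $\phi_1(\Im\partial_2)=\Im\partial^{cell}_2$, since every $2$-cell is the image of a $2$-cube and every $2$-cube maps into $\Im\partial^{cell}_2$. Surjectivity of $\phi_1$ on cycles is immediate, because an edge cycle lifts by choosing orientations. This gives $\mathcal{H}_1(G)\iso H_1(X_G)$. Connectedness of $G$ is not really needed for $H_1$, but it matches the hypothesis.

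\textbf{Main obstacle.} The delicate part is the classification of the images of $2$-cubes, together with the orientation bookkeeping in Step 2. The issue is to ensure that no non-degenerate $2$-cube has a boundary outside the span of the $3$/$4$-cycle boundaries and the orientation-reversal relations. The cases are: all four vertices distinct; exactly one identification, which must be of a diagonal pair or produce a triangle; and two identifications. I would enumerate these cases by the partition of the vertices of $Q^2$ induced by $A$.
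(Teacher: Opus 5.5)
\textbf{Gap in Step 2.} Step 2 fails as written, and the injectivity in your Step 3 depends entirely on it. For the cube you chose, $\delta_1^-A=\delta_2^-A=(v,w)=e$ and $\delta_1^+A=\delta_2^+A=(w,v)=\bar e$, so
\[
\partial_2 A=-(e-\bar e)+(e-\bar e)=0 .
\]
Over $\mathbb{Z}/2$ the four faces $e,\bar e,e,\bar e$ cancel in pairs, so no sign convention rescues it. This is the diagonal $2+2$ partition from your Step 1.

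The relation $e+\bar e\in\Im\partial_2$ does hold, but it comes from the $3+1$ partition with the odd corner in the right place. Take $A(0,0)=A(0,1)=A(1,1)=v$ and $A(1,0)=w$.
\begin{itemize}
\item It is a graph map because $G$ is reflexive.
\item It is non-degenerate, since $\delta_1^-A=(v,v)\neq(w,v)=\delta_1^+A$ and $\delta_2^-A=(v,w)\neq(v,v)=\delta_2^+A$.
\item Its two constant faces vanish in $C_1(G)$, leaving $\partial_2A=(w,v)+(v,w)=e+\bar e$.
\end{itemize}
Putting the odd corner at $(0,1)$ also works. At $(0,0)$ or $(1,1)$, however, the two surviving faces are the same oriented edge and cancel. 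This $3+1$ case is missing from the case list in your ``Main obstacle'' paragraph, and it is exactly the case that accounts for $\ker\phi_1$.

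\textbf{The rest of the argument.} With this witness, the rest of your argument goes through. Read $\phi_2(A)$ as the corresponding cell when the image is a $3$- or $4$-cycle, and $0$ otherwise. The literal ``sum of the distinct edges of its image'' would be nonzero in the path case, and it does not land in $C^{cell}_2$ anyway. The faces of $A$ are the four sides of the closed walk $A(0,0),A(1,0),A(1,1),A(0,1)$. This gives your classification and the chain-map identity, and $\phi_2$ is onto. Step 3 is fine: $\phi_0$ is the identity, so preimages of cellular cycles are cycles.

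\textbf{Comparison with the paper.} The paper gives no proof of this statement. It imports the result from \cite{ender-kapulkin:fast-h1} and cross-references \cite[Prop.~5.12]{barcelo-kramer-laubenbacher-weaver}. That reference points to the route through the discrete fundamental group: identify it with $\pi_1(X_G)$, then abelianize using Hurewicz-type theorems, which is where connectedness enters. Your direct chain-level comparison avoids homotopy theory and the connectedness hypothesis. It also adapts to $\mathbb{Z}$ coefficients: orient the edges and $2$-cells and set $\phi_1(\bar e)=-\phi_1(e)$. The same witness then gives exactly the generators $e+\bar e$ of $\ker\phi_1$.
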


This construction was exploited in \cite{ender-kapulkin:fast-h1} to obtain an algorithm for computing the first discrete homology group that outperforms existing algorithms \cite{kapuklin-kershaw:computations}.
The key insight here is that the vector space $C_2(G)$ can be replaced by a vector space $U_2(G)$ of a much smaller dimension.

When trying to generalize this to higher dimensions, one must answer the following question: what are the higher-dimensional analogues of $3$- and $4$-cycles?
Given an answer to this question, one can try to build a CW-complex by gluing $n$-cells into each such \emph{basic shape}.
The $3$- and $4$-cycles in a graph $G$ need to be understood here in relation to the object they are replacing, namely maps $Q^2 \to G$.
More precisely, both $3$- and $4$-cycles arise as quotients of $Q^2$, the notion that we now recall.

\begin{definition}
    A graph map $f \colon G \to H$ is a \emph{quotient map} if it is surjective on edges.
\end{definition}

We note that a graph map surjective on edges is automatically surjective on vertices, since we assume that our graphs are reflexive.

An important way of constructing quotient graphs is by taking factorizations of graph maps.
More precisely, given a graph map $f \colon G \to H$, we can factor it through its image as follows:
    \[
        \begin{tikzcd}
    	G && H \\
    	& {\Im f}
    	\arrow["f", from=1-1, to=1-3]
    	\arrow["{q_f}"', two heads, from=1-1, to=2-2]
    	\arrow["{i_f}"', hook, from=2-2, to=1-3]
        \end{tikzcd}
    \]
Explicitly, we define an equivalence relation $\sim_f$ on the vetex set of $G$ by declaring $v \sim_f v'$ if and only if $f(v) = f(v')$.
The set of vertices of $\Im f$ is then the quotient $G_V/\sim_f$ with the edge relation given by $[v] \sim [w]$ if there are $v' \in [v]$ and $w' \in [w]$ such that $v' \sim_f w'$, or equivalently, $f(v') = f(w')$.
In the factorization above, the graph map $q_f \colon G \to \Im f$, given by $q_f(v) = [v]$, is a quotient map and $i_f \colon \Im f \to H$, given by $i_f[v] = f(v)$ is a monomorphism (i.e., injective on both vertices and edges).
It is easy to see that the factorization of $f \colon G \to H$ into a quotient map followed by a monomorphism is unique up to a unique isomorphism of factorizations.

Returning to the question at hand, since $3$- and $4$-cycles are quotients of $Q^2$, one might reasonably expect that the set of basic shapes in dimension $3$ will be a set of quotients of $Q^3$.
However, things become slightly more complicated, since one needs to keep track not only of the quotient graph itself but also of the relation by which the quotient was formed.
To this end, we define the category of shapes.

\begin{definition} \label{def:n_shapes_cat}
    For a nonnegative integer $n$, define the \emph{category of $n$-shapes}, denoted $\mathcal{S}_R(n)$, with
    \begin{itemize}
        \item objects given by pairs $(P,q)$ such that $q \colon Q^n \to P$ is a quotient map;
        \item a morphism from $(P,q)$ to $(P',q')$  is a graph map $f \colon P \to P'$ making the following diagram commute:
     \[
        \begin{tikzcd}
        	& {C_n(Q^n; R)} & \\
        	{C_n(P; R)} && {C_n(P'; R)} \\
        	{C_{n-1}(P; R)} && {C_{n-1}(P'; R)}
        	\arrow["{q_*}"', from=1-2, to=2-1]
        	\arrow["{q'_*}", from=1-2, to=2-3]
        	\arrow["{\partial_n}"', from=2-1, to=3-1]
        	\arrow["{\partial'_n}", from=2-3, to=3-3]
        	\arrow["{f_*}", from=3-1, to=3-3]
        \end{tikzcd}
    \]
    \end{itemize} 
\end{definition}

\begin{remark}
  The definition of morphism of $n$-shapes might seem esoteric at first.
  Note that any graph map $f \colon P \to P'$ making the following triangle commute
  \[
  \begin{tikzcd}
      & Q^n \ar[ld, "q", swap] \ar[rd, "q'"] & \\
      P \ar[rr, "f"] & & P'
  \end{tikzcd}
  \]
  is a morphism of $n$-shapes from $(P,q)$ to $(P', q')$.
  That is, every morphism between quotient maps in the slice category $Q^n/\Graph$ is a morphism in $\mathcal{S}_R(n)$; the converse is not necessarily true, although we do not know of an example.
  Consequently, for $n$-shapes $(P, q)$ and $(P', q')$, we have inclusions:
 \[
  \left\{ 
  \begin{array}{cc}
  \text{isomorphisms} \\
   (P, q) \iso (P', q' )\\
   \text{in } Q^n/\Graph
   \end{array}
  \right\}
  \subseteq 
  \left\{
    \begin{array}{cc}
  \text{isomorphisms} \\
   (P, q) \iso (P', q' ) \\
   \text{in } \mathcal{S}_R(n)
   \end{array}
  \right\}
  \subseteq
    \left\{
  \begin{array}{cc}
  \text{isomorphisms} \\
  P \iso P' \\
   \text{in } \Graph
   \end{array}
  \right\}
  \]
\end{remark}

\begin{notation}
  We write $S_R(n)$ for the set of isomorphism classes of objects in $\mathcal{S}_R(n)$.  
\end{notation}

Although the cardinality of $S_R(n)$ is always finite, it might be computationally difficult to find it.
Such a search would naturally involve the $2^n$-th \emph{Bell number} $B_{2^n}$, which counts partitions of the set with $2^n$ elements.
These correspond to the number of isomorphism classes of objects in the full subcategory of $Q^n/\Graph$ spanned by quotient maps.
The Bell numbers nevertheless provide some estimate for the number of isoclasses $S_R(n)$, namely: $B_1 = 1$, $B_2 = 2$, $B_8 = 4140$, and $B_{16} = 10480142147$, thus yielding already uncomputable results for $n = 4$.

To see how shapes can aid computations of discrete homology, we need the notion of universality.
    
% Since we demand that the quotient map $q$ from any object $(P,q) \in S_R(n)$ be surjective, there are finitely many isomorphism classes of objects in $\mathcal{S}(n)$.
% In particular, to determine all such isomorphism classes, we must inspect a number of quotient maps at most equal to the number of partitions of a set with $2^n$ elements, which is the Bell number $B_{2^n}$. This number explodes with $n$, so it is impractical to check every isomorphism class of $n$-shapes, even for $n = 4$. Fortunately, it is still possible to find a subset of isomorphism classes of $S_R(n)$ that, in some sense, generates $\Im(\partial_n)$. To achieve this goal, we must first develop the notion of universality.

\begin{definition} \label{def:universal_list} \leavevmode
    \begin{itemize}
        \item A subset $T \subseteq S_R(n)$ is \emph{universal in a graph $G$} if
        \[
        \Im\big(\partial_n \colon C_n(G; R) \to C_{n-1}(G; R)\big) = \operatorname{span}\left\{ \partial_n(iq) \mid (P,q) \in T \text{ and } i \colon P \hookrightarrow G \text{ injective}\right\}.
        \]
        \item A subset $T \subseteq S_R(n)$ is \emph{universal} if it is universal in $G$ for all graphs $G$. 
    \end{itemize}
\end{definition}

\begin{example}
    The proof of \cite[Thm.~3.4]{ender-kapulkin:fast-h1} shows that the set $\{C_3, C_4\}$ is a universal subset of $2$-shapes with any choice of quotient map $Q^2 \to C_3$.
\end{example}

% The following two results justify the use of a universal subset of $n$-shapes.

% \begin{proposition} \label{prop:reduced_image}
%     If $T \subseteq S_R(n)$ is universal, then $\Im(\partial_n) = \Im\left(\partial_n|_{\operatorname{span}T} \right)$.
% \end{proposition}

% \begin{proof}
%     This follows immediately from the definition of universality and linearity of $\partial_n$.
% \end{proof}

% \begin{corollary}
%     $\operatorname{rank}(\partial_n) = \operatorname{rank}\left(\partial_n|_{\operatorname{span}T}\right)$. \qed
% \end{corollary}

If we can find a small, universal set $T$ of $n$-shapes for some nonnegative integer $n$, then we need only look for injective maps from each shape in $T$ to a graph $G$.
However, \cref{def:universal_list} requires us to iterate over \emph{all} graphs $G$, which is impossible.
The following proposition removes this obstacle, allowing us to decide if a set $T$ is universal only by comparing elements of $T$ to elements of the finite set $S_R(n)$. 

\begin{proposition} \label{prop:local_universality}
    A subset $T \subseteq S_R(n)$ is universal if and only if it is universal for every $P$ such that $(P,q) \in S_R(n)$. 
\end{proposition}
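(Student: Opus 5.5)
One direction is immediate: if $T$ is universal, then by \cref{def:universal_list} it is universal in every graph, and in particular in every $P$ with $(P,q) \in S_R(n)$. The plan for the converse is to reduce an arbitrary graph $G$ to these finitely many quotient graphs by factoring each generator of $C_n(G;R)$ through its image. The inclusion ``$\supseteq$'' in the universality condition for $G$ always holds, since each $iq$ is an element of $C_n(G;R)$ (or zero if degenerate). So the work is to show that $\partial_n(A)$ lies in the span for every non-degenerate singular $n$-cube $A \colon Q^n \to G$. Linearity then extends this to all of $\Im(\partial_n)$.

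Fix such an $A$ and factor it as $A = i_A \circ q_A$, with $q_A \colon Q^n \to \Im A$ a quotient map and $i_A \colon \Im A \hookrightarrow G$ a monomorphism. Then $(\Im A, q_A)$ is an object of $\mathcal{S}_R(n)$, so by hypothesis $T$ is universal in $P := \Im A$. Since $q_A$ is itself a cube in $P$, we get
\[
\partial_n(q_A) = \sum_k r_k\, \partial_n(j_k q_k),
\]
with $(P_k,q_k) \in T$ and $j_k \colon P_k \hookrightarrow P$ injective. Now apply the chain map $(i_A)_*$, which commutes with $\partial$ by naturality of the cubical chain complex and preserves degeneracy. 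This gives
\[
\partial_n(A) = \partial_n\big((i_A)_* q_A\big) = \sum_k r_k\, \partial_n(i_A j_k q_k).
\]
Each composite $i_A j_k \colon P_k \to G$ is injective, being a composite of injective maps. Hence $\partial_n(A)$ lies in the required span.

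The main subtlety to check is that $T$ is a set of \emph{isomorphism classes} in $\mathcal{S}_R(n)$, and these classes are coarser than isomorphisms in the slice category. So I must verify that the span in \cref{def:universal_list} does not depend on the chosen representative. Suppose $f \colon (P,q) \to (P',q')$ is an isomorphism in $\mathcal{S}_R(n)$ and $i' \colon P' \hookrightarrow G$ is injective. Then $i' f$ is injective, and by the defining square together with naturality,
\[
\partial_n(i' q') = i'_* f_* \partial_n(q) = \partial_n(i' f q).
\]
Thus each representative yields the same boundaries. The same naturality also covers the use of $\Im A$: its isomorphism class (up to relabelling) is a genuine member of $S_R(n)$, so the hypothesis applies to it. With that invariance settled, the argument above completes the proof.
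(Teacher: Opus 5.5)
Your proposal is correct and follows the same route as the paper's proof: factor the cube through its image $\Im A$, apply universality of $T$ in that quotient graph, and push the resulting expression forward along the injective map $i_A$ using naturality of $\partial_n$. The reduction to single generators by linearity and the independence of the span from the choice of representative of an isomorphism class in $\mathcal{S}_R(n)$ are left implicit in the paper, and you handle both correctly.
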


\begin{proof}
    The forward implication is clear. For the reverse inclusion, fix an arbitrary graph $G$ and let $x \in \Im(\partial_n^G \colon C_n(G; R) \to C_{n-1}(G; R))$, the image of the $n$-th boundary map for $G$. Then, $x = \partial_n^G(u)$ for some singular $n$-cube $u \colon I^n \to G$. 
    Factor $u$ as
    \[
        \begin{tikzcd}
    	{I^n} && G \\
    	& {P}
    	\arrow["u", from=1-1, to=1-3]
    	\arrow["{q}"', two heads, from=1-1, to=2-2]
    	\arrow["{i}"', hook, from=2-2, to=1-3]
        \end{tikzcd}
    \]
    Then, $x = \partial_n^G(iq) = i_*(\partial_n^Pq)$, as chain maps commute with boundaries. 
    Universality of $T$ with respect to each graph $P \in S_R(n)$ then gives
    \begin{align*}
        x &= i_*\left( \sum_{\substack{T = (P', q') \in T \\ i' \colon P' \hookrightarrow P\text{ injective}}} \alpha_{i', T} \partial_n^P(i' \circ q')\right) \\ 
        &=  \sum_{\substack{T = (P', q') \in T \\ i' \colon P' \hookrightarrow P\text{ injective}}} \alpha_{i', T} \partial_n^G(i \circ i' \circ q')
    \end{align*}
    again using linearity and the fact that chain maps commute with boundaries. Observe, however, that $i \circ i' \circ q' \in \{iq \mid (P,q) \in  T \text{ and } i \colon P \hookrightarrow G \text{ injective}\}$, so we are done. 
\end{proof}

\cref{prop:local_universality} gives us a finite computer-verifiable procedure for deciding whether a set $T$ of $n$-shapes is universal.
Namely, we can compute all possible isomorphism classes of $n$-shapes and perform finitely many membership tests on the $n$-shapes in $S_R(n)$.  

It remains to see how to construct a small, universal subset of $n$-shapes given the full list $S_R(n)$. One way to accomplish this is by means of a greedy algorithm.

\begin{definition} \label{alg:greedy_reduction}
    Let $N = \# S_R(n)$.
    Order the elements $t_1, \dots, t_N$ of $S_R(n)$, where $t_i = (P_i, q_i)$.
    Let $F_0 = \varnothing$ and, for $k \geq 1$, let
    \[
    F_{k} = \begin{cases}
        F_{k-1}, & \Im(\partial_n|_{\operatorname{span}(F_{k-1} \cup \{t_k\})}) = \operatorname{span} \left\{ \partial_n(iq) \mid (P,q) \in F_{k-1} \text{ and } i \colon P \hookrightarrow P_k \text{ injective}\right \} \\ 
        F_{k-1} \cup \{t_k\} & \text{otherwise}.
    \end{cases}
    \]
    The set $F_N$ is called the \emph{reduced list} for the chosen ordering. 
\end{definition}

\begin{proposition}
    For every ordering of $S_R(n)$, the corresponding reduced list is universal.
\end{proposition}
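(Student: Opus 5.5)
Using \cref{prop:local_universality}, it suffices to show that $F_N$ is universal in every graph $P_k$ with $(P_k,q_k) = t_k \in S_R(n)$. For each $k$, write
\[
V_k := \Im\big(\partial_n \colon C_n(P_k;R) \to C_{n-1}(P_k;R)\big)
\quad\text{and}\quad
W_k(F) := \operatorname{span}\{\partial_n(iq) \mid (P,q)\in F,\ i \colon P \hookrightarrow P_k \text{ injective}\}.
\]
Two observations come first.
\begin{itemize}
\item $W_k(F)\subseteq V_k$ for every $F$, since each $iq$ is a singular cube in $P_k$.
\item $W_k$ is monotone in $F$. In particular $W_k(F_{k-1}) \subseteq W_k(F_N)$, because the sets $F_j$ are nested.
\end{itemize}
So the task for each $k$ is to show $V_k \subseteq W_k(F_N)$. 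I read the defining condition in \cref{alg:greedy_reduction} as saying exactly that $F_{k-1}$ is already universal in $P_k$, i.e.\ $V_k = W_k(F_{k-1})$.

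The argument then splits into two cases according to how $F_k$ was formed.

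\emph{Case $t_k\notin F_N$.} Then $t_k$ was rejected, so $V_k = W_k(F_{k-1}) \subseteq W_k(F_N)$ and we are done.

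\emph{Case $t_k\in F_N$.} I would not try to use the greedy condition here. Instead, run an argument like the proof of \cref{prop:local_universality}, but inside $P_k$ only. Let $u\colon Q^n\to P_k$ be any singular cube and factor it as $u = iq$, with $q\colon Q^n \to P$ a quotient and $i\colon P\hookrightarrow P_k$ a monomorphism. Then $(P,q)$ is isomorphic to some $t_j$. If $P\cong P_k$, then $i$ is a bijection on vertices and edges, and $\partial_n(u)$ is the image under an automorphism-type map of $\partial_n(q_j)$ up to the shape isomorphism. The morphism condition in \cref{def:n_shapes_cat} is designed precisely so that $\partial_n(iq)$ is unchanged when $(P,q)$ is replaced by an isomorphic shape. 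Hence $\partial_n(u) \in W_k(\{t_k\}\cup\cdots)$. If $P$ is strictly smaller than $P_k$, then $P$ has fewer vertices or edges, and I would induct: $\partial_n(q)$ lies in $W$ computed inside $P$, and composing with $i$ transports the result into $P_k$.

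This induction is the main obstacle. It requires every $t_j$ with $P_j$ smaller than $P_k$ to be handled before $t_k$, which is false for an arbitrary ordering. To get around this, I would prove the more robust claim by strong induction on $|E(P_k)|+|V(P_k)|$, independent of the ordering: $F_N$ is universal in every $P_j$. The base of the induction is that for a cube whose quotient $P$ equals $P_j$ up to bijection, $\partial_n(u)$ lies in $W_j(\{t\})$ for whichever element $t$ of $F_N$ represents that case. If that representative was dropped, the first case applies to it instead.

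The delicate point is this step. When $t_j$ was discarded, its boundary must be recovered from $F_{j-1}\subseteq F_N$ mapping injectively into $P_j$. Those shapes may map onto subgraphs of $P_j$, and we then need to transport along the injection $P_j\hookrightarrow P_k$. Injections compose, so this is the same manipulation as in \cref{prop:local_universality}, and it closes the induction.
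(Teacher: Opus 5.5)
\textbf{The gap is in Case~1: you read the rejection test of \cref{alg:greedy_reduction} more strongly than it is written.} The test compares, inside $P_k$, the span coming from $F_{k-1}\cup\{t_k\}$ with the span coming from $F_{k-1}$. Injections $P_k\hookrightarrow P_k$ are automorphisms, and $W_k(F_{k-1})$ is stable under them. So rejection means precisely $\partial_n(q_k)\in W_k(F_{k-1})$.

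It does not give $V_k=W_k(F_{k-1})$. The space $V_k$ also contains boundaries of cubes in $P_k$ whose shapes occur later in the ordering, and the test never looks at those. For example, a shape with $\partial_n(q_k)=0$ placed first is rejected against $F_0=\varnothing$, although $V_k$ need not vanish. So Case~1 does not follow. In Case~2, when you fall back on ``the first case'' for a dropped representative $t_j$, all you are entitled to is $\partial_n(q_j)\in W_j(F_N)$.

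Separately, the line $\partial_n(u)\in W_k(\{t_k\}\cup\cdots)$ is wrong as written. A cube with full image in $P_k$ can have a shape $t_j\neq t_k$ on an isomorphic graph (cf.\ \cref{ex:diff_boundaries}). Your later patch via ``whichever representative'' is the right fix.

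\textbf{That weaker per-shape statement is all you need, and it makes the induction unnecessary.} For every $j$, whatever the ordering, $\partial_n(q_j)\in W_j(F_N)$. If $t_j\in F_N$, take $i=\mathrm{id}_{P_j}$. Otherwise, use the rejection test together with $F_{j-1}\subseteq F_N$. This two-case observation is the paper's entire proof, after which it invokes \cref{prop:local_universality}.

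The proof of that proposition uses only the per-shape statement. Take $u=iq$ and an isomorphism $f\colon(P,q)\to(P_j,q_j)$ in $\mathcal{S}_R(n)$. Applying the square in \cref{def:n_shapes_cat} to $\mathrm{id}_{Q^n}$ gives $f_*\partial_n(q)=\partial_n(q_j)=\sum\alpha\,\partial_n(i'q')$ with $(P',q')\in F_N$. Hence $\partial_n(u)=\sum\alpha\,\partial_n\big((if^{-1}i')q'\big)$, and $if^{-1}i'$ is injective. The shapes on the right are already in $F_N$, so nothing has to be decomposed again. There is no recursion, hence no dependence on sizes or on the ordering, and the ``main obstacle'' you identify does not arise.

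You are right that $t_k\in F_N$ alone yields only $\partial_n(q_k)\in W_k(F_N)$, not $V_k\subseteq W_k(F_N)$. Running the computation above with $G=P_k$ upgrades the per-shape statement to the literal hypothesis of \cref{prop:local_universality}, for all $k$ at once, whether $t_k$ was kept or dropped.
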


\begin{proof}
    Let $F_N$ be the reduced list, and fix any $n$-shape $t_k = (P_k, q_k) \in S_R(n)$. If $t_k \in F_N$, then trivially 
    \[
    \partial_n(q_k) \in \operatorname{span} \{ \partial_n(iq) \mid (P,q) \in F_{N}, i \colon P \hookrightarrow P_k \}
    \]by setting $P = P_k$ and $i = \text{id}_{P_{k}}$. If $t_k \not\in F_N$, then the greedy algorithm skipped it at some step $k_0$. For this to be the case, we must have already had 
    \[
    \partial_n(q_k) \in \operatorname{span} \{ \partial_n(iq) \mid (P,q) \in F_{k_0-1}, i \colon P \hookrightarrow P_k \}.
    \]By definition, $F_{k-1} \subseteq F_k$ for all $k \geq 1$. Thus, $F_{k_0-1} \subseteq F_N$, so we must have 
    \[
    \partial_n(q_k) \in \operatorname{span} \{ \partial_n(iq) \mid (P,q) \in F_N, i \colon P \hookrightarrow P_k \}.
    \]
    By \cref{prop:local_universality}, $F_N$ is universal.
\end{proof}

Note that while this algorithm provides a procedure for obtaining a reduced universal list of $n$-shapes, it does not guarantee that the reduced list is minimal. Different orderings of $S_R(n)$ may produce different reduced lists $F_N$. 

Up to this point, we have developed a procedure that produces a reduced, universal list of $n$-shapes, for each $n \geq 0$. We give the $n$-shapes in such a list a name. 

\begin{definition} \leavevmode
    \begin{itemize}
        \item For each $n \geq 0$ and an arbitrary ordering of the elements of $S_R(n)$, denote by $T_R(n)$ the reduced universal list of $n$-shapes obtained by applying the algorithm given in \cref{alg:greedy_reduction}.
        \item Fix a graph $G$, an integer $n \geq 0$, and an arbitrary ordering of $S_R(n)$.
        Denote by $U_n(G; R)$ the free $R$-module on the set of all injective graph maps $\varphi \colon P \hookrightarrow G$, where $(P,q) \in T_n$.
    \end{itemize}  
\end{definition}

Using these new definitions, our work so far can be distilled into the following result.

\begin{proposition} \label{prop:basic_comm_diag}
    For a graph $G$, an integer $n \geq 0$ and an arbitrary ordering of the elements of $S_R(n)$, we have the commutative diagram
   
    \[
        \begin{tikzcd}
        	{C_{n+1}(G; R)} & {C_n(G; R)} & {C_{n-1}(G; R)} \\
        	{U_{n+1}(G; R)}
        	\arrow["{\partial_{n+1}^C}", from=1-1, to=1-2]
        	\arrow["{\partial_n^C}", from=1-2, to=1-3]
        	\arrow[hook', from=2-1, to=1-1]
        	\arrow["{\partial_{n+1}^U}"', from=2-1, to=1-2]
        \end{tikzcd}
    \]
   where the operator $\partial_\bullet^C$ is the usual boundary operator, and $\partial_\bullet^U$ is the usual boundary operator restricted to $\operatorname{span}(U_\bullet(G; R))$. Moreover, we have that $\Im\partial_{n+1}^U = \Im\partial_{n+1}^C$, for $n \geq 0$. \qed 
\end{proposition}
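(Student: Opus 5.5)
The plan is to first make the inclusion $U_{n+1}(G;R) \hookrightarrow C_{n+1}(G;R)$ precise. By definition, $U_{n+1}(G;R)$ is free on the injective maps $\varphi \colon P \hookrightarrow G$ with $(P,q) \in T_R(n+1)$. I will send such a generator to the class of the singular $(n+1)$-cube $\varphi q \colon Q^{n+1} \to G$ in $C_{n+1}(G;R)$. With this map fixed, $\partial_{n+1}^U$ is defined as the composite of this map with $\partial_{n+1}^C$. Hence the triangle commutes by construction, and the relation $\partial_n^C \circ \partial_{n+1}^C = 0$ recalled in \cref{sec:prelims} gives $\partial_n^C \circ \partial_{n+1}^U = 0$ as well. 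The only subtlety is that $\varphi q$ might be degenerate. In that case it is zero in $C_{n+1}(G;R)$ and contributes nothing. So the map need not literally be injective, but commutativity still holds; I would note this in a remark rather than dwell on it.

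For the equality of images, the inclusion $\Im \partial_{n+1}^U \subseteq \Im \partial_{n+1}^C$ is immediate from the factorization through $C_{n+1}(G;R)$. For the reverse inclusion, I would invoke the preceding proposition, which says that the reduced list $T_R(n+1)$ is universal. Applying \cref{def:universal_list} to $G$ gives
\[
\Im \partial_{n+1}^C = \operatorname{span}\{\partial_{n+1}(\varphi q) \mid (P,q) \in T_R(n+1),\ \varphi \colon P \hookrightarrow G \text{ injective}\}.
\]
Each generator on the right is $\partial_{n+1}^U$ of a basis element of $U_{n+1}(G;R)$, so this span is exactly $\Im \partial_{n+1}^U$.

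There is no real obstacle here, since the content lies in universality, which was already established via \cref{prop:local_universality}. The one point needing care is bookkeeping. Elements of $T_R(n+1)$ are isomorphism classes in $\mathcal{S}_R(n+1)$, so I will fix a representative $(P,q)$ for each. Universality is phrased in terms of $\partial_{n+1}(iq)$ for these representatives, and the images $\partial_{n+1}(\varphi q)$ depend only on the isomorphism class in the relevant sense, because morphisms of shapes are defined precisely so that boundaries are preserved. This keeps the span independent of the choice of representative.
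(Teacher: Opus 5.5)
Your proposal is correct and follows the paper's route. The paper states this proposition as a direct consequence of the definitions: the triangle commutes because $\partial^U_{n+1}$ is $\partial^C_{n+1}$ precomposed with $\varphi \mapsto \varphi q$, and $\Im \partial^U_{n+1} = \Im \partial^C_{n+1}$ is exactly the universality of the reduced list $T_R(n+1)$ established in the preceding proposition. Your caveat about injectivity is harmless but unnecessary. Degenerate shapes have zero boundary, so the greedy reduction never keeps them. Uniqueness of the quotient--monomorphism factorization, together with the choice of one representative per isomorphism class, then shows that distinct generators of $U_{n+1}(G;R)$ go to distinct non-degenerate cubes.
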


\begin{remark} \label{ex:diff_boundaries}
  In contrast to the case $n=2$, in general, we cannot simply consider subgraphs of $G$ that are abstractly isomorphic to some quotient of $Q^3$.
  To see this, consider the following singular 3-cubes $A, A' \colon Q^3 \to C_4$, where in each diagram, the label on each vertex of $Q^3$ is the image of that vertex in $C_4$, so each vertex is labelled with an integer between $0$ and $3$:
        \[
        \begin{tikzpicture}[scale=1, every node/.style={circle, fill=black, inner sep=1.5pt}]
            
            \node[label=below left:0] (A) at (0,1) {};
            \node[label=below right:3] (B) at (1.5,1) {};
            \node[label=above:2] (C) at (1.5,2.5) {};
            \node[label=above left:1] (D) at (0,2.5) {};

            \node (E)[label=above left:1] at (0.5,1.5) {};
            \node (F)[label=below right:2] at (2,1.5) {};
            \node (G)[label=above right:3] at (2,3) {};
            \node (H)[label=above left:0] at (0.5,3) {};
        
            \draw (A) -- (B) -- (C) -- (D) -- (A);
            \draw (E) -- (F) -- (G) -- (H) -- (E);

            \draw (A) -- (E);
            \draw (B) -- (F);
            \draw (C) -- (G);
            \draw (D) -- (H);

            \node[label=below left:0] (A) at (4,1) {};
            \node[label=below right:1] (B) at (5.5,1) {};
            \node[label=above:2] (C) at (5.5,2.5) {};
            \node[label=above left:3] (D) at (4,2.5) {};

            \node (E)[label=above left:3] at (4.5,1.5) {};
            \node (F)[label=below right:0] at (6,1.5) {};
            \node (G)[label=above right:1] at (6,3) {};
            \node (H)[label=above left:2] at (4.5,3) {};
        
            \draw (A) -- (B) -- (C) -- (D) -- (A);
            \draw (E) -- (F) -- (G) -- (H) -- (E);

            \draw (A) -- (E);
            \draw (B) -- (F);
            \draw (C) -- (G);
            \draw (D) -- (H);   
        \end{tikzpicture}\]
        One easily checks that $A$ and $A'$ are graph maps, and that their quotient graphs $\Im A$ and $\Im A'$ are both isomorphic to $C_4$.
        Moreover, the maps $i_A$ and $i_{A'}$ have equal images in $C_4$.
        However, a simple computation shows that $\partial_3(A) = \delta_1^+ A - \delta_1^- A$, but $\partial_3(A')$ has six non-vanishing terms, so neither one is a linear combination of the other, while they both need to appear in the image of $\partial^U_3$.
        To keep track of this information we are therefore forced to include the quotient map $q \colon Q^3 \to P$ as part of the data of a shape.
\end{remark}

Of course, our results alone do not fully solve the basic shapes conjecture outlined in the introduction.
It is at least unclear whether the $R$-modules $\{U_n(G; R)\}_{n \geq 0}$ form a chain complex $(U_\bullet, \partial_\bullet)$ along with a quasi-isomorphism $U_n(G; R) \hookrightarrow C_n(G; R)$.
Despite these obstacles, the local criterion established by \cref{prop:basic_comm_diag} allows us to compute $\mathcal{H}_n(G; R)$ provided we can compute $U_{n+1}(G; R), C_n(G; R)$ and $C_{n-1}(G; R)$.
As we saw at the beginning of the section, the cardinality of these objects blow up exponentially with $n$, making computation intractable for $n \geq 3$.
However, in the case $n = 2$, the sets are still small enough that we can find a set of basic 3-shapes and use it to compute $\mathcal{H}_2(G)$.

\section{Experiments} \label{sec:experiments}

As an experiment, we compute a reduced set $T(3)$ of 3-shapes using the greedy algorithm outlined in \cref{alg:greedy_reduction}. 
In this section, all computations will be done over $R = \mathbb{Z}/2$ and we will omit the subscript for brevity.
To do so, we first enumerate each element of $S(3)$ by brute force, and then use the greedy algorithm to reduce this list to a small, universal set of 3-shapes. For this experiment, we order the elements of $S(3)$ in ascending order by the number of vertices in the associated quotient of $Q^3$, with ties being broken arbitrarily. As remarked before, choosing a different ordering may well produce a different $T(3)$.
In this case, we recover the reduced, universal set of five 3-shapes depicted below (where we omit the associated boundaries for brevity).
\[
\begin{tikzpicture}[scale=1, every node/.style={circle, fill=black, inner sep=1.5pt}]
  % E_1 (n=4, m=5) - vertical diamond with horizontal middle edge
  \node (T1_1) at (-1.000,1.100) {};   % top
  \node (T1_2) at (-1.700,0.000) {};   % left
  \node (T1_3) at (-0.300,0.000) {};   % right
  \node (T1_4) at (-1.000,-1.100) {};  % bottom
  \draw (T1_1) -- (T1_2);
  \draw (T1_1) -- (T1_3);
  \draw (T1_2) -- (T1_3);
  \draw (T1_2) -- (T1_4);
  \draw (T1_3) -- (T1_4);
  \node[fill=none, below, font=\footnotesize] at (-1.000,-1.400) {$E_1$};
  % E_2 (n=5, m=6) - K_{2,3} with crossings
  \node (T4_1) at (2.381,-0.397) {};
  \node (T4_2) at (1.300,0.000) {};
  \node (T4_3) at (3.219,0.705) {};
  \node (T4_4) at (3.219,-0.705) {};
  \node (T4_5) at (2.381,0.397) {};
  \draw (T4_1) -- (T4_2);
  \draw (T4_1) -- (T4_3);
  \draw (T4_1) -- (T4_4);
  \draw (T4_2) -- (T4_5);
  \draw (T4_3) -- (T4_5);
  \draw (T4_4) -- (T4_5);
  \node[fill=none, below, font=\footnotesize] at (2.500,-1.400) {$E_2$};
  % E_3 (n=6, m=9) - triangular prism 
  \node (T17_1) at (5.700,0.500) {};   % front top
  \node (T17_3) at (4.900,-0.700) {};  % front bottom-left
  \node (T17_4) at (6.500,-0.700) {};  % front bottom-right
  \node (T17_2) at (6.300,1.000) {};   % back top
  \node (T17_5) at (5.500,-0.200) {};  % back bottom-left
  \node (T17_6) at (7.100,-0.200) {};  % back bottom-right
  \draw (T17_1) -- (T17_2);
  \draw (T17_1) -- (T17_3);
  \draw (T17_1) -- (T17_4);
  \draw (T17_2) -- (T17_5);
  \draw (T17_2) -- (T17_6);
  \draw (T17_3) -- (T17_4);
  \draw (T17_3) -- (T17_5);
  \draw (T17_4) -- (T17_6);
  \draw (T17_5) -- (T17_6);
  \node[fill=none, below, font=\footnotesize] at (6.000,-1.400) {$E_3$};
  % E_4 (n=7, m=11) - cube with diagonal
  \node (T72_1) at (8.450,-0.850) {};
  
  \node (T72_3) at (8.450,0.850) {};
  \node (T72_4) at (10.150,0.850) {};
  \node (T72_5) at (8.933,-0.552) {};
  \node (T72_6) at (10.632,-0.552) {};
  \node (T72_7) at (8.833,1.147) {};
  \node (T72_8) at (10.532,1.147) {};

  \draw (T72_1) -- (T72_3);
  \draw (T72_1) -- (T72_5);

  \draw (T72_3) -- (T72_4);
  \draw (T72_3) -- (T72_7);
  \draw (T72_4) -- (T72_8);
  \draw (T72_5) -- (T72_6);
  \draw (T72_5) -- (T72_7);
  \draw (T72_6) -- (T72_8);
  \draw (T72_7) -- (T72_8);

  \draw (T72_1) -- (T72_4);
  \draw (T72_6) -- (T72_4);
 
  \node[fill=none, below, font=\footnotesize] at (9.500,-1.400) {$E_4$};
  % E_5 (n=8, m=12) - 3-cube
  \node (T82_1) at (12.150,-0.850) {};
  \node (T82_2) at (13.850,-0.850) {};
  \node (T82_3) at (12.150,0.850) {};
  \node (T82_4) at (13.850,0.850) {};
  \node (T82_5) at (12.533,-0.552) {};
  \node (T82_6) at (14.232,-0.552) {};
  \node (T82_7) at (12.533,1.147) {};
  \node (T82_8) at (14.232,1.147) {};
  \draw (T82_1) -- (T82_2);
  \draw (T82_1) -- (T82_3);
  \draw (T82_1) -- (T82_5);
  \draw (T82_2) -- (T82_4);
  \draw (T82_2) -- (T82_6);
  \draw (T82_3) -- (T82_4);
  \draw (T82_3) -- (T82_7);
  \draw (T82_4) -- (T82_8);
  \draw (T82_5) -- (T82_6);
  \draw (T82_5) -- (T82_7);
  \draw (T82_6) -- (T82_8);
  \draw (T82_7) -- (T82_8);
  \node[fill=none, below, font=\footnotesize] at (13.000,-1.400) {$E_5$};
\end{tikzpicture}
\]
The cell $E_5$ is simply $Q^3$.
The cell $E_4$ is $Q^3$ with one edge contracted. 
The triangular prism $E_3$ is obtained by identifying two pairs of adjacent vertices that form parallel edges of $E_3$.
The cell $E_2$ is isomorphic (as graphs) to the complete bipartite graph $K_{2,3}$.
Finally, the cell $E_1$ is isomorphic to the complete tripartite graph $K_{1,1,2}$. 

We run the following experiment using this set of basic 3-shapes.

\begin{enumerate}
    \item Pre-compute the set $S(3)$ by enumerating every set partition of 8 vertices with nondegenerate quotient maps and nonzero boundaries.
    \item Use the greedy algorithm from \cref{alg:greedy_reduction} with the aforementioned ordering of the elements of $S(3)$ to pre-compute $T(3)$.
    \item For an input graph $G$, perform the following steps.
    \begin{enumerate}
        \item Let $C_1(G)$ consist of all non-loop edges in $G$;
        \item Let $C_2(G)$ consist of all nondegenerate singular 2-cubes in $G$;
        \item For each element $t = (P, q) \in T(3)$, enumerate all injective graph maps $\varphi \colon P \hookrightarrow G$. This computes $U_3(G)$. 
        \item Compute the matrices of the boundary operators $\partial^C_2$ and $\partial^U_3$.
        \item Compute $\dim \mathcal{H}_2(G) = \operatorname{nullity} \partial^C_2 - \operatorname{rank} \partial^U_3$.
    \end{enumerate}
    \item Run the cubical homology algorithm from \cite{kapuklin-kershaw:computations} on $G$, and compare the results with this new algorithm, which we call the ``basic shapes algorithm''. 
\end{enumerate}

Correctness of this algorithm is guaranteed by \cref{prop:basic_comm_diag}. We compare our results against those of the previously-developed cubical homology algorithm in order to benchmark the relative speedup obtained by only filling in these basic shapes. 

For our test graphs, we use Erd\H os-R\`enyi graphs. For an integer $n \geq 0$ and real number $p \in [0,1]$, an Erd\H os-R\`enyi graph $G(n,p)$ is a graph with $n$ vertices, with pairs of vertices connected by an edge independently with fixed probability $p$.

We use Erd\H os-R\`enyi graphs because by taking $p$ small (in this case, all of our test graphs satisfy $0.03 \leq p \leq 0.07$), we can ensure the presence of sparse graphs with a mix of trivial and nontrivial second discrete homology groups. After running both the basic shapes algorithm and the cubical algorithm on 200 test Erd\H os-R\`enyi graphs, all second homology dimensions matched, as expected. All test graphs had between 100 and 200 vertices, and since they are random graphs, their topology is random. Plots comparing the runtimes from this experiment are below.

\begin{center}
    \includegraphics[scale=0.5]{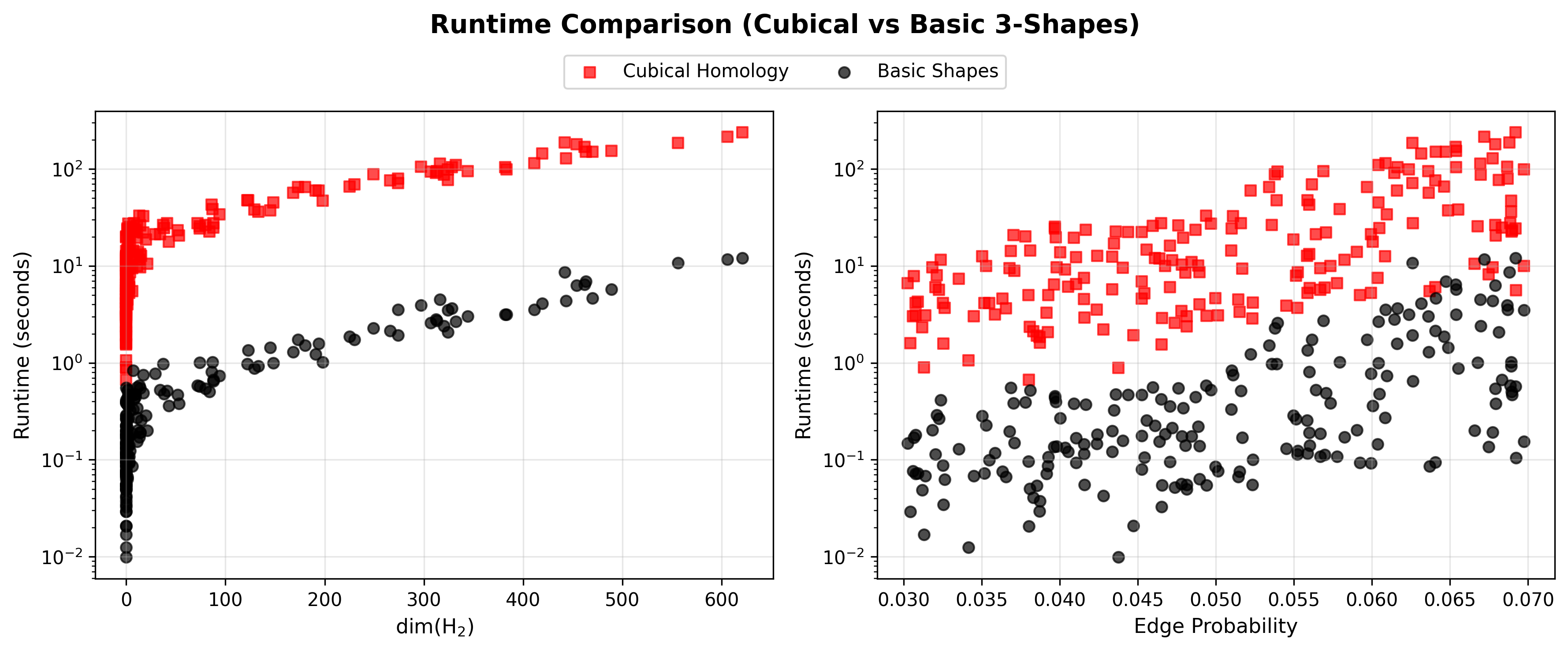}
\end{center}

The new algorithm outlined above was faster than the cubical algorithm from \cite{kapuklin-kershaw:computations} on each of the 200 test graphs, with a median ratio of 47.4532 between the runtime of the cubical algorithm and that of the basic shapes algorithm. All source code, plots, data files and a CSV file containing the full results of this experiment can be found at \url{https://github.com/JacobEnder/DiscreteHomology-Algorithms}. 

This experiment serves as a valuable proof of concept that our new procedure can indeed be used to considerably speed up computations of discrete homology, provided the appropriate chain groups are known in advance. As discussed, the primary limitation of this approach is computational --- it very quickly becomes impossible to precompute $T(n)$. However, this new procedure already gives a considerably faster algorithm for computing $\mathcal{H}_2(G)$.

% Uncomment the following if you have a bibliography file
 \bibliographystyle{amsalphaurlmod}
 \bibliography{all-refs.bib}

\end{document}